\newtheorem{theorem}{Theorem}
\newtheorem{proposition}{Proposition}
\newtheorem{corollary}{Corollary}
\newtheorem{remark}{Remark}
\def\bb0{{\mathbb{0}}}
\def\bb{{\mathbf{b}}}
\def\b0{{\mathbf{0}}}
\def\sf0{{\mathsf{0}}}
\begin{document}
%
\title{On Wirelessly Powered Communications with Short Packets}


\author{Talha Ahmed Khan${}^{*}$, Robert W. Heath Jr.${}^{*}$ and Petar Popovski${}^{\dagger}$\\

\thanks{${}^{*}$ Talha Ahmed Khan and Robert W. Heath Jr. are with the Department of Electrical and Computer Engineering at The University of Texas at Austin, USA (Email: \{talhakhan, rheath\}@utexas.edu).}
\thanks{${}^{\dagger}$ Petar Popovski is with the Department of Electronic Systems at Aalborg University, Denmark (Email: petarp@es.aau.dk).}
\thanks{This work was supported in part by the Army Research Office under grant W911NF-14-1-0460, and a gift from Mitsubishi Electric Research Labs.}}
\maketitle
\begin{abstract}
Wireless-powered communications will entail short packets due to naturally small payloads, low-latency requirements and/or insufficient energy resources to support longer transmissions. In this paper, a wireless-powered communication system is investigated where an energy harvesting transmitter, charged by a power beacon via wireless energy transfer, attempts to communicate with a receiver over a noisy channel. Leveraging the framework of finite-length information theory, the system performance is analyzed using metrics such as the energy supply probability at the transmitter, and the achievable rate at the receiver. The analysis yields useful insights into the system behavior in terms of key parameters such as the harvest blocklength, the transmit blocklength, the average harvested power and the transmit power. Closed-form expressions are derived for the asymptotically optimal transmit power. Numerical results suggest that power control is essential for improving the achievable rate of the system in the finite blocklength regime. 
\end{abstract}

\begin{IEEEkeywords}
Energy harvesting, wireless information and power transfer, energy supply probability, wireless power transfer, power control, finite-length information theory, non-asymptotic achievable rate.
\end{IEEEkeywords}
%
\IEEEpeerreviewmaketitle

\section{Introduction}
With wireless devices getting smaller and more energy-efficient, energy harvesting is emerging as a potential technology for powering such low-power devices.  
A related area is RF (radio frequency) or wireless energy harvesting where an energy harvesting device extracts energy from the incident RF signals \cite{EnergyHarvestWirelessCommSurvey2015}. This is attractive for future paradigms such as the Internet of Things, where powering a potentially massive number of devices will be a major challenge \cite{IoT2014}. Many of these energy harvesting devices will need to sense and communicate information to a cloud or control unit. In contrast to most wireless systems designed for Internet access, energy harvesting communication systems will likely make exclusive use of short packets. This is due to intrinsically small data payloads, low-latency requirements, and/or lack of energy resources to support longer transmissions\cite{EnergyHarvestWirelessCommSurvey2015,durisi2015towards,yang2014finite,fong2015non}.  

For an energy harvesting system with short packets, the capacity analysis conducted in the asymptotic blocklength regime could be misleading. This has spurred research characterizing the performance of an energy harvesting communication system in the non-asymptotic or finite blocklength regime \cite{fong2015non,yang2014finite,guo2016finite,ebrahim2015lowlatency}. Leveraging the finite-length information theoretic framework proposed in \cite{polyanskiy2010finite}, \cite{yang2014finite} characterized the achievable rate for a noiseless binary communications channel with an energy harvesting transmitter. This work was extended to the case of an additive white Gaussian noise (AWGN) channel and more general discrete memoryless channels in \cite{fong2015non}. For an energy harvesting transmitter operating under a save-then-transmit protocol \cite{ozel2012awgn}, the achievable rate at the receiver was characterized in the finite blocklength regime\cite{fong2015non}. 
The authors in \cite{guo2016finite} investigated the mean delay of an energy harvesting channel in the finite blocklength regime.
Unlike the work in \cite{yang2014finite,fong2015non,guo2016finite} which assume an infinite battery at the energy harvester, \cite{ebrahim2015lowlatency} conducted a finite-blocklength analysis for the case of a battery-less energy harvesting channel. 

In this paper, we analyze the performance of a wireless-powered communication system where an RF energy harvesting node, charged by a power beacon via wireless energy transfer, attempts to communicate with a receiver over an AWGN channel. Using the framework of finite-length information theory\cite{polyanskiy2010finite}, we characterize the energy supply probability and the achievable rate of the considered system for the case of short packets. Leveraging the analytical results, we expose the interplay between key system parameters such as the harvest and transmit blocklengths, the average harvested power, and the transmit power. We also provide closed-form analytical expressions for the asymptotically optimal transmit power. Numerical results reveal that the asymptotically optimal transmit power is also approximately optimal for the finite blocklength regime. 
The prior work \cite{fong2015non,yang2014finite,ebrahim2015lowlatency,guo2016finite} treating short packets falls short of characterizing the performance for the case of wireless energy harvesting. Moreover, most prior work \cite{fong2015non,yang2014finite,ebrahim2015lowlatency,guo2016finite,ozel2012awgn} implicitly assumes concurrent harvest and transmit operation, which may be infeasible in practice. 

\section{System Model}\label{secSys}
We consider a wireless-powered communication system where a wireless power beacon (PB) charges an energy harvesting (EH) device, which then attempts to communicate with another receiver (RX) using the harvested energy. 
The nodes are equipped with a single antenna each. We assume that the energy harvester uses a \emph{save-then-transmit} protocol \cite{ozel2012awgn} to enable wireless-powered communications. 
The considered protocol divides the communication frame consisting of $s$ channel uses (or slots) into an energy harvesting phase having $m$ channel uses, and an information transmission phase having $n$ channel uses. 
The first $m$ channel uses are used for harvesting energy from the RF signals transmitted by the power beacon, which is then saved in a (sufficiently large) energy buffer.
This is followed by an information transmission phase consisting of $n$ channel uses, where the transmitter uses the harvested energy to transmit information to the receiver. We call $m$ the \textit{harvest} blocklength, $n$ the \textit{transmit} blocklength, and $s=m+n$ the \textit{total} blocklength or frame size. 
 We will conduct the subsequent analysis for the non-asymptotic blocklength regime, i.e., for the practical case of \emph{short packets} where the total blocklength is finite.

\subsection{Energy Harvesting Phase} 
The signal transmitted by a power beacon experiences distance-dependent path loss and channel fading before reaching the energy harvesting node. The harvested energy is, therefore, a random quantity due to the underlying randomness of the wireless link. 
We let random variable $Z_i=\eta\beta P_{\rm{PB}} H_i$ model the energy harvested in slot $i$ ($i=1,\cdots,m$), where $\eta\in(0,1)$ denotes the harvester efficiency, $P_{\rm{PB}}$ is the PB transmit power, $\beta$ gives the average large-scale channel gain, while the random variable $H_i$ denotes the small-scale channel gain. Note that we have ignored the energy due to noise since it is negligibly small. 
We consider quasi-static block flat Rayleigh fading for the PB-EH links such that the channel remains constant over a block, and randomly changes to a new value for the next block. 
In other words, the energy arrivals within a harvesting phase are fully correlated, i.e., $Z_i=Z_1\equiv Z,\,\forall\,i=1,2,\cdots,m$, where $Z_i$ is exponentially distributed with mean $\mathbb{E}[Z_i]\triangleq P_{\rm{E}}=\eta\beta P_{\rm{PB}}$. 
This is motivated by the observation that the harvest blocklength in a \emph{short-packet} communications system would typically be smaller than the channel coherence time. 

\subsection{Information Transmission Phase} The energy harvesting phase is followed by an information transmission phase where the EH node attempts to communicate with a destination RX node over an unreliable AWGN channel. 
We assume that the EH node uses a Gaussian codebook for signal transmission (see Section \ref{secIT}).
We let $X_\ell$ be the signal intended for transmission in slot $\ell$ with average power $P_{\rm{t}}$, where $\ell=1,\cdots,n$, and $n$ is fixed. 
The resulting (intended) sequence $X^n=\left(X_1,\cdots,X_n\right)$ consists of independent and identically distributed (IID) Gaussian random variables such that $X_\ell$ {\raise.17ex\hbox{$\scriptstyle{\sim}$}}~$\mathcal{N}(0,P_{\rm{t}})$. 
To transmit the intended sequence $X^n$ over the transmit blocklength, the EH node needs to satisfy the following energy constraints. 
\begin{align}\label{eq:constraint}
	\sum_{\ell=1}^{k}X_\ell^2&\leq\sum_{i=1}^{m}Z_i\qquad k=1,2,\cdots,n
\end{align}
The constraints in (\ref{eq:constraint}) simplify to $\sum_{\ell=1}^{n}X_\ell^2\leq mZ$ for the case of correlated energy arrivals. We let $\tilde{X}^n=\left(\tilde{X}_1,\cdots,\tilde{X}_n\right)$ be the transmitted sequence. Note that $\tilde{X}^n\neq{X}^n$ when the energy constraints are violated as the EH node lacks sufficient energy to put the intended symbols on the channel.
The signal received at the destination node in slot $\ell$ is given by
$Y_\ell=\tilde{X_\ell} + V_\ell$, where $V^n=\left(V_1,\cdots,V_n\right)$ is an IID sequence modeling the receiver noise such that $V_\ell~{\raise.17ex\hbox{$\scriptstyle{\sim}$}}~\mathcal{N}(0,\sigma^2)$ is a zero-mean Gaussian random variable with variance $\sigma^2$. Note that any deterministic channel attenuation for the EH-RX link can be equivalently tackled by scaling the noise variance. Similarly, we define $Y^n=\left(Y_1,\cdots,Y_n\right)$ as the received sequence.

\subsection{Information Theoretic Preliminaries}\label{secIT}
We now describe the information theoretic preliminaries for the EH-RX link. Let us assume that the EH node transmits a message $W\in\mathcal{W}$ 
over $n$ channel uses. Assuming $W$ is drawn uniformly from $\mathcal{W}\triangleq\{1,2,\cdots,M\}$, we define an $(n,M)$-code having the following features:
It uses a set of encoding functions $\{\mathcal{F}_\ell\}_{\ell=1}^{n}$ for encoding the source message $W\in\mathcal{W}$ given the energy harvesting constraints, i.e., the source node uses $\mathcal{F}_\ell : \mathcal{W}\times\mathbb{R}^\ell_+ \rightarrow \mathbb{R}$ for transmission slot $\ell$, where
$\mathcal{F}_\ell(W,Z^\ell)=\tilde{X}_\ell$ given $Z^\ell=(Z_1,\cdots,Z_\ell)$ such that the energy harvesting constraint in (\ref{eq:constraint}) is satisfied. Specifically, $\tilde{X}_\ell={X}_\ell$ where ${X}_\ell~${\raise.17ex\hbox{$\scriptstyle{\sim}$}}~$\mathcal{N}(0,P_{\rm{t}})$ is drawn IID from a Gaussian codebook when (\ref{eq:constraint}) is satisfied, and $\tilde{X}_\ell=0$ otherwise.
It uses a decoding function $\mathcal{G}:\mathbb{R}^n\rightarrow\mathcal{W}$ that produces the output $\mathcal{G}(Y^n)=\hat{{W}}$, where $Y^n=\left(Y_1,\cdots,Y_n\right)$ is the sequence received at the destination node.

We let $\epsilon\in [0,1)$ denote the target error probability for the noisy communication link.
For $\epsilon\in[0,1)$, an $(n,M,\epsilon)$-code for an AWGN EH channel is defined as the $(n,M)$-code for an AWGN channel such that the average probability of decoding error $\Pr\{\hat{W}\neq W\}$ does not exceed $\epsilon$.
A rate $R$ is \emph{$\epsilon$-achievable} for an AWGN EH channel if there exists a sequence of {$(n,M_n,\epsilon_n)$-codes} such that $\liminf\limits_{n\rightarrow\infty}\frac{1}{n}\log(M_n) \geq R$ and $\limsup\limits_{n\rightarrow\infty} \epsilon_n \leq \epsilon$.
The \emph{$\epsilon$-capacity} $C_\epsilon$ for an AWGN EH channel is defined as $C_\epsilon=\sup\{R:R\,\,\text{is}\, \epsilon\text{-\textit{achievable}}\}$. 	
\subsection{Performance Metrics} 
We now introduce the metrics used for characterizing the performance of the considered \emph{short-packet} wireless-powered communications system. 
Note that the overall performance is marred by two key events. First, due to lack of sufficient energy, the EH node may not be able to transmit the intended codewords during the information transmission phase, possibly causing a decoding error at the receiver. Second, due to  a noisy EH-RX channel, the received signal may not be correctly decoded. For the former, we define a metric called the \emph{energy supply probability}, namely, the probability $\Pr\left[\sum_{i=1}^{n}X_i^2\leq mZ\right]$ that an EH node can support the intended transmission. 
For the latter, we define and characterize the \emph{$\epsilon$-achievable rate} in the finite blocklength regime.

\section{Analytical Results}\label{secAnl}
In this section, we characterize the energy supply probability and the achievable rate in the finite blocklength regime.
\subsection{Energy Supply Probability}
We define the \textit{energy supply probability} $P_\textrm{es}(m,n,a)$ as the probability that an EH node has sufficient energy to transmit the intended codeword, namely, 
\begin{align}\label{eq:def}
P_\textrm{es}(m,n,a)=\Pr\left[\sum_{i=1}^{n}X_i^2\leq mZ\right]
\end{align}
for a harvest blocklength $m$, a transmit blocklength $n$, and a power ratio $a=\frac{P_{\rm{t}}}{P_{\rm{E}}}$. Similarly, we define $P_\textrm{eo}(m,n,a)=1-P_\textrm{es}(m,n,a)$  as the \textit{energy outage probability} at the energy harvesting node. The following proposition characterizes the energy supply probability for the considered system.

\begin{proposition}\normalfont
Assuming the intended transmit symbols $\{X_i\}_{i=1}^{n}$ are drawn IID from $\mathcal{N}(0,P_t)$, the energy sequence $\{Z_i\}_{i=1}^{m}=Z$ is fully correlated, and $Z$ follows an exponential law with mean $P_{\rm{E}}$, the energy supply probability is given by
\begin{align}\label{eq:p_ec single PB}
	P_\textrm{es}(m,n,a)=\frac{1}{\left(1+\frac{2a}{m}\right)^{\frac{n}{2}}}
\end{align}
for $m>2a$ where $a=\frac{P_\textrm{t}}{P_{\rm{E}}}$, while $m$ and $n$ denote the blocklengths for the harvest and the transmit phase.
\end{proposition}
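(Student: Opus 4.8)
The plan is to reduce the stated probability to the moment generating function (MGF) of a chi-squared variable evaluated at a single point. First I would identify the distribution of the transmit energy $S\triangleq\sum_{i=1}^{n}X_i^2$. Since the $X_i$ are IID $\mathcal{N}(0,P_{\rm{t}})$, each $X_i^2/P_{\rm{t}}$ is $\chi^2_1$, so $S=P_{\rm{t}}\,\chi^2_n$; equivalently $S$ is Gamma-distributed with shape $n/2$ and scale $2P_{\rm{t}}$. The energy available for transmission is $mZ$ with $Z$ exponential of mean $P_{\rm{E}}$, and by the simplification noted after (\ref{eq:constraint}) the $n$ per-slot constraints collapse to the single binding event $\{S\le mZ\}$ (the left side is nondecreasing in the slot index while the right side is constant). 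Hence $P_\textrm{es}(m,n,a)=\Pr[S\le mZ]$ with $S$ and $Z$ independent.

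The key step is to condition on $S$ and exploit the exponential tail of $Z$. Given $S$, the event $\{mZ\ge S\}$ has conditional probability $\Pr[Z\ge S/m\mid S]=e^{-S/(mP_{\rm{E}})}$, using $\Pr[Z\ge z]=e^{-z/P_{\rm{E}}}$. Taking the expectation over $S$ (legitimate by nonnegativity of the integrand, so Tonelli applies) turns the probability into an expectation of an exponential,
\begin{align}
P_\textrm{es}(m,n,a)=\mathbb{E}\!\left[e^{-S/(mP_{\rm{E}})}\right]=\mathbb{E}\!\left[e^{-\frac{a}{m}\chi^2_n}\right],
\end{align}
which is exactly the MGF of $\chi^2_n$ evaluated at the negative argument $-a/m$, after using $S/(mP_{\rm{E}})=(a/m)\chi^2_n$ with $a=P_{\rm{t}}/P_{\rm{E}}$.

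To finish, I would substitute the standard Gamma/chi-squared MGF $\mathbb{E}[e^{t\chi^2_n}]=(1-2t)^{-n/2}$, valid for every $t<\tfrac12$ and in particular at $t=-a/m<0$, which immediately yields $(1+2a/m)^{-n/2}$, i.e. (\ref{eq:p_ec single PB}). I expect the only subtlety to be modest bookkeeping: correctly tracking the scale $P_{\rm{t}}$ against $P_{\rm{E}}$ so the ratio collapses to $a$, and justifying the conditioning/interchange. One point worth flagging is that the MGF is finite at $t=-a/m$ for all positive $m$ and $a$, so this route produces the closed form with no restriction. The stated hypothesis $m>2a$ (equivalently $2a/m<1$) is precisely the range in which the binomial series expansion of $(1+2a/m)^{-n/2}$ in powers of $2a/m$ converges, which suggests the condition is an artifact of presenting or manipulating the result as a series rather than a requirement of the identity itself.
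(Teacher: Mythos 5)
Your proof is correct and takes essentially the same route as the paper's: substitute $W=\frac{1}{P_{\rm{t}}}\sum_{i=1}^{n}X_i^2\sim\chi^2_n$, condition on it, use the exponential tail of $Z$ to reduce the probability to $\mathbb{E}\bigl[e^{-(a/m)W}\bigr]$, and evaluate the chi-squared MGF. Your added observation that the hypothesis $m>2a$ is not actually needed is also sound, since the MGF $(1-2t)^{-n/2}$ is valid for all $t<\tfrac12$ and here $t=-a/m<0$, so the paper's restriction is superfluous for the identity itself (it matters only later, e.g.\ in Theorem~1's constraints).
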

\begin{proof}
	The proof follows by leveraging the statistical properties of the random variables. Consider
	\begin{align}\label{eq:prop 1}
	P_\textrm{es}\left(m,n,a\right)&=\Pr\left[\sum_{i=1}^{n}X_i^2\leq mZ\right]
	\overset{(a)}{=}\Pr\left[W \leq \frac{mZ}{P_\textrm{t}}\right]\nonumber\\
	&\overset{(b)}{=}\mathbb{E}_{W}\left[e^{-\frac{P_\textrm{t}}{P_{\rm{E}}m}W}\right]
	\overset{}{=}\frac{1}{\left(1+\frac{2a}{m}\right)^\frac{n}{2}}
	\end{align}
	where (${a}$) follows from the substitution $W=\frac{1}{P_\textrm{t}}\sum_{i=1}^{n}X_i^2$ where $W$ is a Chi-squared random variable with $n$ degrees of freedom. (${b}$) is obtained by conditioning on the random variable $W$, and by further noting that $Z$ is exponentially distributed with mean $P_{\rm{E}}$. Assuming $m>2a$, the last equation follows from the definition of the moment generating function of a Chi-squared random variable.
\end{proof}
While Proposition 1 is valid for $m>2a$, we note that this is the case of practical interest since it is desirable to operate at $a<1$, as evident from Section \ref{secSim}. Further, the expression in (\ref{eq:p_ec single PB}) makes intuitive sense as the energy outages would increase with the transmit blocklength $n$ for a given $m$, and decrease with the harvest blocklength $m$ for a given $n$. 
Let us fix $P_{\rm{t}}$ and $P_{\rm{E}}$. For a given $m$, we may improve the reliability of the EH-RX communication link by increasing the blocklength $n$, albeit at the expense of the energy supply probability. 
With a smaller transmit power $P_{\rm{t}}$, the energy harvester is less likely to run out of energy during an ongoing transmission. Therefore, when $m+n$ is fixed, we may reduce $P_{\rm{t}}$ to meet the energy supply constraint, but this would reduce the channel signal-to-noise ratio (SNR). 
This underlying tension between the energy availability and the communication reliability will be highlighted throughout the rest of this paper.
The following discussion relates the transmit power to the harvest and transmit blocklengths, illustrating some of the key tradeoffs.  
\begin{remark}\normalfont
The energy supply probability is more sensitive to the length of the transmit phase compared to that of the harvest phase. 
This observation also manifests itself in terms of the energy requirements at the transmitter. For instance, to maintain an energy supply probability $\rho$, it follows from (\ref{eq:p_ec single PB}) that the power ratio satisfies $a\geq\frac{m}{2}\left(\rho^{-\frac{2}{n}}-1\right)$. Note that the power ratio varies only linearly with the harvest blocklength $m$, but superlinearly with the transmit blocklength $n$. This further implies that for a fixed $n$, doubling the harvest blocklength relaxes the transmit power budget by the same amount. That is, the energy harvester can double its transmit power $P_{\rm{t}}$ (and therefore the channel SNR) without violating the required energy constraints. In contrast, reducing the transmit blocklength for a given $m$ brings about an exponential increase in the transmit power budget at the energy harvester. 
\end{remark}
The following corollary treats the scaling behavior of the energy supply probability as the  blocklength becomes large.
\begin{corollary}\normalfont
When the harvest blocklength $m$ scales in proportion to the transmit blocklength $n$ such that $m=cn$ for some constant $c>0$, the energy supply probability $P_\textrm{es}(m,n,a)$ converges to a limit as $n$ becomes asymptotically large. In other words, $\lim\limits_{n\rightarrow\infty}P_\textrm{es}(m,n,a)=e^{-\frac{a}{c}}<1$ such that the limit only depends on the power ratio $a>0$ and the proportionality constant $c>0$. Further, under proportional blocklength scaling, this limit also serves as an upper bound on the energy supply probability for finite blocklengths, i.e., $P_\textrm{es}(m,n,a)\leq e^{-\frac{a}{c}}<1$.
\end{corollary}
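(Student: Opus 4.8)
The plan is to treat both assertions as elementary statements about the single function obtained by putting $m=cn$ in (\ref{eq:p_ec single PB}), namely
$P_\textrm{es}(cn,n,a)=\left(1+\frac{2a}{cn}\right)^{-n/2}$,
and read off the limit and the bound from its behaviour in $n$. (Note the constraint $m>2a$ of Proposition 1 here reads $n>\frac{2a}{c}$, which holds for all large $n$ and in particular in the limit.) For the limit I would expose the standard exponential form by setting $k=\frac{cn}{2a}$, so that $P_\textrm{es}(cn,n,a)=\left[\left(1+\frac{1}{k}\right)^{k}\right]^{-a/c}$. As $n\to\infty$ with $a,c$ fixed we have $k\to\infty$, the inner bracket tends to $e$, and continuity of $t\mapsto t^{-a/c}$ on $(0,\infty)$ yields $\lim_{n\to\infty}P_\textrm{es}(cn,n,a)=e^{-a/c}$; the strict inequality $e^{-a/c}<1$ is then immediate since $a/c>0$.

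For the finite-blocklength bound $P_\textrm{es}(cn,n,a)\le e^{-a/c}$ I would take logarithms and recast it as the scalar inequality $\frac{n}{2}\ln\!\left(1+\frac{2a}{cn}\right)\ge\frac{a}{c}$, i.e.\ as something to be obtained by applying a single estimate for $\ln(1+x)$ at $x=\frac{2a}{cn}$ and multiplying by $\frac{n}{2}$. The natural tool is again the monotone sequence $k\mapsto\left(1+\frac{1}{k}\right)^{k}$, which increases toward its limit $e$; combined with the rewriting above this pins $\left[\left(1+\frac{1}{k}\right)^{k}\right]^{-a/c}$ against $e^{-a/c}$ for every finite $k$, and the stated bound is to be read off by comparing the monotone base with $e$ and transferring the comparison through the exponent $-a/c$.

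The step I expect to be the crux — and the one that must be handled with care — is the \emph{direction} of this last comparison. Because the exponent $-a/c$ is negative, raising to it reverses inequalities, so whether the tangent-line estimate $\left(1+\frac{1}{k}\right)^{k}\le e$ propagates to $P_\textrm{es}(cn,n,a)\le e^{-a/c}$ or to the reverse relation depends entirely on correctly tracking the sign through both the reciprocal and the exponent, and this is exactly where a slip is easiest to introduce. To settle the orientation unambiguously I would not rely on the scalar estimate in isolation but would establish the monotonicity of $n\mapsto P_\textrm{es}(cn,n,a)$ directly — for instance by differentiating $\ln P_\textrm{es}(cn,n,a)=-\frac{n}{2}\ln\!\left(1+\frac{2a}{cn}\right)$ with respect to $n$ and checking the sign, or by comparing consecutive blocklengths — so that the limit $e^{-a/c}$ is identified as the appropriate one-sided extreme value and the bound is obtained from the monotone family with the inequality oriented consistently with that limiting value.
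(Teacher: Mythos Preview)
Your treatment of the limit is correct and is exactly the elementary computation one expects; the paper states the corollary without proof, so there is nothing further to compare on that part.

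Your instinct to be suspicious of the direction of the finite-$n$ bound is well placed --- in fact, if you carry your own monotonicity check through, you will find that the inequality in the corollary is stated the wrong way round. Writing $g(n)=\ln P_{\textrm{es}}(cn,n,a)=-\tfrac{n}{2}\ln\!\bigl(1+\tfrac{2a}{cn}\bigr)$ and using $\ln(1+x)<x$ for $x>0$ gives $g(n)>-\tfrac{a}{c}$ for every finite $n$, i.e.\ $P_{\textrm{es}}(cn,n,a)>e^{-a/c}$. Equivalently, since $(1+1/k)^{k}\uparrow e$, your rewriting $P_{\textrm{es}}=\bigl[(1+1/k)^{k}\bigr]^{-a/c}$ with $k=cn/(2a)$ shows $P_{\textrm{es}}$ is \emph{decreasing} in $n$ toward $e^{-a/c}$, so the limit is a \emph{lower} bound, not an upper bound. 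A quick numerical check (e.g.\ $a=c=1$, $n=3$: $P_{\textrm{es}}=(5/3)^{-3/2}\approx 0.465>e^{-1}\approx 0.368$) confirms this. So your plan cannot prove the stated inequality $P_{\textrm{es}}(m,n,a)\le e^{-a/c}$ --- not because of any defect in the argument, but because that inequality is false; the corollary should read $P_{\textrm{es}}(m,n,a)\ge e^{-a/c}$. The qualitative remark that follows it in the paper (that energy outage remains a bottleneck at any blocklength) survives, since it only uses the strict inequality $e^{-a/c}<1$ of the limit.
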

The previous corollary also shows that energy outage is a fundamental bottleneck regardless of the blocklength, assuming at best linear scaling.

\subsection{Achievable Rate}
The following result characterizes the $\epsilon$-achievable rate of the considered wireless-powered communication system in the finite blocklength regime.
\begin{theorem}\normalfont
Given a target error probability $\epsilon\in[0,1)$ for the noisy channel, the $\epsilon$-achievable rate $R_{EH}^{}\left(\epsilon,m,n,a,\gamma\right)$ of the considered system with harvest blocklength $m$, transmit blocklength $n$, power ratio $a$ (where $2a<m$), and the SNR $\gamma=\frac{P_{\rm{t}}}{\sigma^2}$ is given by 
\begin{align}\label{eq:main}
	R_{EH}^{}\left(\epsilon,m,n,a,\gamma\right)=\frac{\frac{n\log(1+\gamma)}{2}
		-\sqrt{\frac{2+\epsilon}{\epsilon}\frac{\gamma}{\gamma+1}n}
			-{(n)}^{\frac{1}{4}}-1}
			{n+m}
\end{align}
for all tuples $(m,n)$ satisfying
\begin{align}\label{eq:mainc1}
m\geq\frac{2a}{\exp\left(\frac{2\ln(1+0.5\epsilon)}{\left(\ln\left[\frac{2+\epsilon}
		{\epsilon^2}\right]\right)^4}\right)-1}
\end{align}
and
\begin{align}\label{eq:mainc2}
n\leq 2 \frac{\ln(1+0.5\epsilon)}{\ln\left(1+\frac{2a}{m}\right)}.
\end{align}	
\end{theorem}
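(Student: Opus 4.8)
The plan is to split the overall decoding error into two contributions---an energy-outage event at the transmitter and a channel-decoding error on the noisy AWGN link---and to allocate the budget $\epsilon$ between them so that their sum is exactly $\epsilon$. First I would condition on the energy-supply event $\{\sum_{\ell=1}^n X_\ell^2 \le mZ\}$: on its complement the save-then-transmit encoder zeroes out symbols and I pessimistically declare an error, whereas on the event itself the EH--RX link is an ordinary real AWGN channel over $n$ uses. A union bound then gives $\Pr[\hat W \neq W] \le P_\textrm{eo}(m,n,a) + \epsilon_\textrm{ch}$, where $\epsilon_\textrm{ch}$ is the error of a good finite-blocklength code for the clean channel. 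Using Proposition~1, condition (\ref{eq:mainc2}) is equivalent to $(1+\tfrac{2a}{m})^{n/2}\le 1+\tfrac{\epsilon}{2}$, i.e.\ $P_\textrm{es}\ge \tfrac{2}{2+\epsilon}$, so that $P_\textrm{eo}\le \tfrac{\epsilon}{2+\epsilon}$; this is the first piece of the budget.

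For the channel term I would invoke a Feinstein/threshold-decoding random-coding bound for the AWGN channel with an i.i.d.\ $\mathcal N(0,P_\textrm{t})$ input: for any threshold there exists an $(n,M)$ code with
\[
\epsilon_\textrm{ch}\le \Pr\!\Big[\, i(X^n;Y^n)\le \ln M + \ln\tfrac{1}{\theta_2}\,\Big] + \theta_2,
\]
where $i(X^n;Y^n)=\sum_{\ell=1}^n i(X_\ell;Y_\ell)$ is the information density and the second term comes from the change-of-measure estimate $\Pr[i(\bar X^n;Y^n)>\tau]\le e^{-\tau}$. The key computation is that, under the Gaussian input, each $i(X_\ell;Y_\ell)$ has mean $C=\tfrac12\ln(1+\gamma)$ and \emph{unconditional} variance exactly $\tfrac{\gamma}{\gamma+1}$ (which, incidentally, fixes $\log$ in (\ref{eq:main}) to the natural logarithm, since no $\log^2 e$ factor appears); I would verify this by writing $i$ in terms of the normalized input and noise and evaluating second moments. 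Applying Chebyshev's inequality to the first term with tolerance $\theta_1$ forces a back-off of $\sqrt{nV/\theta_1}$ from the mean, yielding the achievable size $\ln M = nC - \sqrt{nV/\theta_1} - \ln\tfrac{1}{\theta_2}$.

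To make the two pieces add up I would set $\theta_1=\tfrac{\epsilon}{2+\epsilon}$ and $\theta_2=\tfrac{\epsilon^2}{2+\epsilon}$, so that $P_\textrm{eo}+\theta_1+\theta_2\le \tfrac{\epsilon}{2+\epsilon}+\tfrac{\epsilon}{2+\epsilon}+\tfrac{\epsilon^2}{2+\epsilon}=\epsilon$, certifying an $(n,M,\epsilon)$ code. With these choices the back-off becomes $\sqrt{\tfrac{2+\epsilon}{\epsilon}\tfrac{\gamma}{\gamma+1}n}$ and the threshold penalty becomes $\ln\tfrac{2+\epsilon}{\epsilon^2}$. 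Finally I would use the hypotheses to replace this $\epsilon$-dependent penalty by the cleaner $n^{1/4}$: combining (\ref{eq:mainc1}) with (\ref{eq:mainc2}) confines $n$ to the regime on which $n^{1/4}\ge \ln\tfrac{2+\epsilon}{\epsilon^2}$, so lowering $\ln M$ to $nC-\sqrt{\tfrac{2+\epsilon}{\epsilon}\tfrac{\gamma}{\gamma+1}n}-n^{1/4}-1$ (the trailing $-1$ absorbing the integrality of $M$ and the residual union-bound slack) is still achievable, because any rate below an achievable one is achievable. Dividing by the total blocklength $n+m$ then gives $R_{EH}$ as in (\ref{eq:main}).

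The main obstacle I anticipate is the coupling between the two error mechanisms: the energy-supply event depends on $\sum_\ell X_\ell^2$ and is therefore correlated with the very codeword whose decoding I am analyzing, so the clean union bound $\Pr[\hat W\neq W]\le P_\textrm{eo}+\epsilon_\textrm{ch}$ needs justification---one must argue that conditioning on energy supply does not corrupt the random-coding argument, or explicitly account for a controlled pessimism. A secondary technical point is checking that (\ref{eq:mainc1}) and (\ref{eq:mainc2}) together really do place $n$ in the regime $n\ge(\ln\tfrac{2+\epsilon}{\epsilon^2})^4$ that legitimizes the $n^{1/4}$ substitution; everything else (the Chi-squared/MGF identity behind Proposition~1, the Chebyshev step, and the variance computation) is routine.
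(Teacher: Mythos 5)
Your proposal is correct and follows essentially the same route as the paper's appendix: the identical error-budget split $\frac{\epsilon}{2+\epsilon}+\frac{\epsilon}{2+\epsilon}+\frac{\epsilon^2}{2+\epsilon}=\epsilon$ among energy outage (via Proposition~1 and condition (\ref{eq:mainc2})), a Chebyshev bound on the information-density miss event with mean $\frac{1}{2}\ln(1+\gamma)$ and variance $\frac{\gamma}{\gamma+1}$, and a change-of-measure/threshold-decoding bound on false alarms with the $n^{1/4}$ offset legitimized by $n\geq\left(\ln\frac{2+\epsilon}{\epsilon^2}\right)^4$ and the $-1$ absorbing integrality of $M$. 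Even the coupling issue you flag is handled in the paper exactly as you anticipate---any realization with $Y^n\neq X^n(W)+V^n$ is pessimistically counted as an error, so the union bound needs no conditioning argument.
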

\begin{proof}
	See Appendix.
\end{proof}
For a given target error probability $\epsilon$, 
a harvest blocklength $m$ can support a transmit blocklength only as large as in (\ref{eq:mainc2}). Moreover, a sufficiently large $m$, as given in (\ref{eq:mainc1}), is required for a sufficiently large $n$ to meet the target error probability $\epsilon$. 
The following proposition provides an analytical expression for the achievable rate in the asymptotic blocklength regime. We note that the asymptotic results provide a useful analytical handle for the non-asymptotic case as well. 
\begin{proposition}\normalfont\label{proposition:asym rate}
Let $R_{\textrm{EH}}^{\infty}(\epsilon,a,\gamma)$ denote the asymptotic achievable rate as the transmit blocklength $n\rightarrow\infty$, i.e., $R^{\infty}_{\textrm{EH}}(\epsilon,a,\gamma)=\lim\limits_{n\rightarrow\infty}R_\textrm{EH}(\epsilon,m,n,a,\gamma)$. It is given by
\begin{align} \label{eq:prop2}
R^\infty_{\textrm{EH}}(\epsilon,a,\gamma)&= L(a,\epsilon) C^\infty_{\textrm{AWGN}}(\gamma)
\end{align}
where \begin{equation}C^\infty_{\textrm{AWGN}}(\gamma)=\frac{1}{2}\log(1+\gamma),\quad\gamma\geq 0 
\end{equation}
denotes the capacity of an AWGN channel without the energy harvesting constraints, whereas
\begin{align}\label{eq:prpo2a}
L(a,\epsilon)=\frac{1}{1+\frac{a}{\log\left(1+0.5\epsilon\right)}},\quad a\geq 0,\,\epsilon\in[0,1)
\end{align}
where $L(a,\epsilon)\in[0,1]$ such that $1-L(a,\epsilon)$ gives the (fractional) loss in capacity due to energy harvesting constraints.
\end{proposition}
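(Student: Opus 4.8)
The plan is to treat $R_{EH}(\epsilon,m,n,a,\gamma)$ as a function of the design pair $(m,n)$ for fixed $\epsilon,a,\gamma$, to identify how $m$ must scale with $n$ so that the feasibility constraints (\ref{eq:mainc1})--(\ref{eq:mainc2}) remain satisfied as $n\to\infty$, and then to evaluate the limit of the resulting expression. The key observation is that the numerator of (\ref{eq:main}) depends only on $n$, whereas the denominator $n+m$ is strictly increasing in $m$; hence, for each fixed $n$, the rate is maximized by choosing the smallest admissible $m$. I would therefore invert (\ref{eq:mainc2}) to obtain the smallest feasible harvest blocklength
\begin{equation}
m_{\min}(n)=\frac{2a}{\exp\!\left(\frac{2\ln(1+0.5\epsilon)}{n}\right)-1},
\end{equation}
and note that, since $m_{\min}(n)\to\infty$ as $n\to\infty$ while the bound in (\ref{eq:mainc1}) is a fixed constant, it is (\ref{eq:mainc2}) that is binding in the large-$n$ regime.

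With $m=m_{\min}(n)$, I would normalize both the numerator and the denominator of (\ref{eq:main}) by $n$ before passing to the limit. Dividing the numerator by $n$ gives $\tfrac12\log(1+\gamma)-\sqrt{\tfrac{2+\epsilon}{\epsilon}\tfrac{\gamma}{\gamma+1}}\,n^{-1/2}-n^{-3/4}-n^{-1}$, so that every term except the leading one vanishes and the numerator divided by $n$ tends to $C^\infty_{\textrm{AWGN}}(\gamma)=\tfrac12\log(1+\gamma)$. For the denominator I would write $(n+m_{\min}(n))/n=1+m_{\min}(n)/n$ and reduce the problem to evaluating $\lim_{n\to\infty}m_{\min}(n)/n$.

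The main computational step is this last limit. Substituting $u=\frac{2\ln(1+0.5\epsilon)}{n}$ (so that $u\to0^{+}$ and $n=\frac{2\ln(1+0.5\epsilon)}{u}$) turns the ratio into
\begin{equation}
\frac{m_{\min}(n)}{n}=\frac{a}{\ln(1+0.5\epsilon)}\cdot\frac{u}{e^{u}-1},
\end{equation}
and since $\lim_{u\to0}\frac{u}{e^{u}-1}=1$, I would obtain $\lim_{n\to\infty}m_{\min}(n)/n=\frac{a}{\log(1+0.5\epsilon)}$. Combining the two normalized limits (both finite, with the denominator limit strictly positive) by the quotient rule yields
\begin{equation}
R^\infty_{\textrm{EH}}(\epsilon,a,\gamma)=\frac{C^\infty_{\textrm{AWGN}}(\gamma)}{1+\frac{a}{\log(1+0.5\epsilon)}}=L(a,\epsilon)\,C^\infty_{\textrm{AWGN}}(\gamma),
\end{equation}
which is precisely (\ref{eq:prop2})--(\ref{eq:prpo2a}).

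I expect the only genuinely subtle point to be the justification that $m=m_{\min}(n)$ is the correct choice, i.e., that the asymptotic achievable rate is the limit of the rate optimized over admissible $m$ rather than along some arbitrary scaling; this rests on the monotonicity of the denominator in $m$ noted in the first step. A secondary point of care is confirming that the $\sqrt{n}$ and $n^{1/4}$ penalty terms are genuinely sublinear, so that they do not contribute in the limit, and keeping the $\log/\ln$ convention consistent so that the factor $\frac{a}{\log(1+0.5\epsilon)}$ appearing in $L(a,\epsilon)$ matches the $\frac{a}{\ln(1+0.5\epsilon)}$ produced by the blocklength ratio.
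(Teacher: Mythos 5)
Your proof is correct and follows essentially the same route as the paper's: both substitute the binding harvest blocklength $m=\frac{2a}{(1+0.5\epsilon)^{2/n}-1}$ from (\ref{eq:mainc2}), discard the sublinear $\sqrt{n}$ and $n^{1/4}$ backoff terms, and evaluate $\lim_{n\to\infty} m/n=\frac{a}{\log(1+0.5\epsilon)}$ (your $u/(e^u-1)\to 1$ substitution is just a reparametrization of the paper's limit $\lim_{n\to\infty}n\bigl((1+x)^{2/n}-1\bigr)=2\log(1+x)$). Your explicit justifications that $m_{\min}(n)$ is the rate-maximizing choice and that (\ref{eq:mainc1}) is eventually non-binding are welcome refinements the paper leaves implicit, but they do not change the argument.
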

\begin{proof}
Using (\ref{eq:main}), $R_{\textrm{EH}}^{\infty}\left(\epsilon,a,\gamma\right)$ can be expressed as  
\begin{align}
	R_{\textrm{EH}}^{\infty}\left(\epsilon,a,\gamma\right)&=\lim\limits_{n\rightarrow \infty}\frac{\frac{n\log(1+\gamma)}{2}
		-\sqrt{\frac{2+\epsilon}{\epsilon}\frac{\gamma}{\gamma+1}n}
			-{(n)}^{\frac{1}{4}}-1}
			{n+m}\\
	&\overset{(a)}{=}\lim\limits_{n\rightarrow \infty}\frac{1}{1+\frac{m}{n}}\frac{\log(1+\gamma)}{2}\\
	&\overset{(b)}{=}\lim\limits_{n\rightarrow \infty}\frac{1}{1+\frac{2a}{n[1+0.5\epsilon]^{\frac{2}{n}}-1}}\frac{\log(1+\gamma)}{2}\\
	&\overset{(c)}{=}\underbrace{\frac{1}{1+\frac{a}{\log\left(1+0.5\epsilon\right)}}}_{L(a,\epsilon)}
	\underbrace{\frac{\log(1+\gamma)}{2}}_{C_{\textrm{AWGN}}^\infty(\gamma)}
\end{align}
where (${a}$) follows since the higher order terms in (\ref{eq:main}) vanish as $n\rightarrow\infty$. Note that for a given $\epsilon$ and $a$, $m$ and $n$ should satisfy (\ref{eq:mainc1}) and (\ref{eq:mainc2}). (b) is obtained by substituting $m=\frac{2a}{\left[1+0.5\epsilon\right]^{\frac{2}{n}}-1}$ from (\ref{eq:mainc2}), and by further assuming that $n\geq\left(\log\left(\frac{2+\epsilon}{\epsilon^2}\right)\right)^4$. Finally, (c) follows by noting that $\lim\limits_{n\rightarrow\infty}n\left(\left(1+x\right)^{\frac{2}{n}}-1\right)=2\log(1+x)$.
\end{proof}
\begin{remark}\normalfont
Proposition 2 reveals a fundamental communications limit of the considered wireless-powered system. In order to guarantee an $\epsilon$-reliable communication over $n$ channel uses, the node first needs to accumulate sufficient energy during the initial harvesting phase. A sufficiently large $m$ helps improve the energy availability at the transmitter. This harvesting overhead, however, causes a rate loss (versus a non-energy harvesting system) as the first $m$ channel uses are reserved for harvesting. Moreover, as the transmit blocklength $n$ grows, so does the length of the initial harvesting phase $m$, resulting in an inescapable performance limit on the communication system. This limit depends on i) the power ratio $a$, and ii) the required reliability $\epsilon$, and is captured by the prelog term $L(a,\epsilon)$ in (\ref{eq:prpo2a}) for a given $\gamma$. Moreover, this behavior is more visible for latency-constrained systems where the total blocklength is fixed. 
\end{remark}
\begin{corollary}\normalfont
As the power ratio $a\rightarrow0$ in (\ref{eq:prop2}), the asymptotic achievable rate converges to the capacity of a non-energy harvesting AWGN channel, i.e., $\lim\limits_{a\rightarrow0}R^\infty_{\textrm{EH}}(\epsilon,a,\gamma)=C^\infty_{\textrm{AWGN}}(\gamma)$. 
\end{corollary}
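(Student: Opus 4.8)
The plan is to read the result directly off the factorized form of the asymptotic rate established in Proposition \ref{proposition:asym rate}. Equation (\ref{eq:prop2}) writes $R^\infty_{\textrm{EH}}(\epsilon,a,\gamma)$ as the product $L(a,\epsilon)\,C^\infty_{\textrm{AWGN}}(\gamma)$, and the factor $C^\infty_{\textrm{AWGN}}(\gamma)=\tfrac{1}{2}\log(1+\gamma)$ carries no dependence on the power ratio $a$. Hence the entire $a$-dependence is confined to the prelog term $L(a,\epsilon)$, and the limit factorizes as
\begin{equation}
\lim_{a\to 0}R^\infty_{\textrm{EH}}(\epsilon,a,\gamma)=C^\infty_{\textrm{AWGN}}(\gamma)\,\lim_{a\to 0}L(a,\epsilon),
\end{equation}
which is legitimate because $C^\infty_{\textrm{AWGN}}(\gamma)$ is constant in $a$. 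The whole task therefore collapses to evaluating the scalar limit of $L(a,\epsilon)$ as $a\to 0$.

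For that second step I would invoke the closed form in (\ref{eq:prpo2a}), namely $L(a,\epsilon)=\bigl(1+a/\log(1+0.5\epsilon)\bigr)^{-1}$. Fixing the target error probability at some $\epsilon\in(0,1)$ makes $\log(1+0.5\epsilon)$ a strictly positive constant, so the ratio $a/\log(1+0.5\epsilon)$ is a continuous function of $a$ that vanishes as $a\to 0$. By continuity of $x\mapsto 1/(1+x)$ at $x=0$, we obtain $\lim_{a\to 0}L(a,\epsilon)=1$, and substituting back gives $\lim_{a\to 0}R^\infty_{\textrm{EH}}(\epsilon,a,\gamma)=C^\infty_{\textrm{AWGN}}(\gamma)$, as claimed.

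There is essentially no analytical obstacle here; the content of the corollary is interpretive rather than technical. The one point worth flagging is the need for $\epsilon>0$ so that $\log(1+0.5\epsilon)>0$ and $L(a,\epsilon)$ is well defined, so I would state the result for $\epsilon\in(0,1)$ to keep the prelog term finite. Conceptually, the limit reflects that driving the transmit power ratio to zero renders the energy-harvesting constraint (\ref{eq:constraint}) non-binding, since the harvested energy almost surely suffices, so that the fractional capacity loss $1-L(a,\epsilon)$ vanishes and the channel reverts to an unconstrained AWGN channel.
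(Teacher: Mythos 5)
Your proof is correct and takes the same route as the paper, which states this corollary without proof precisely because it is immediate from the factorization $R^\infty_{\textrm{EH}}(\epsilon,a,\gamma)=L(a,\epsilon)\,C^\infty_{\textrm{AWGN}}(\gamma)$ in Proposition 2 together with $\lim_{a\to 0}L(a,\epsilon)=1$. Your observation that one should restrict to $\epsilon\in(0,1)$ so that $\log(1+0.5\epsilon)>0$ (at $\epsilon=0$ one has $L(a,0)=0$ for every $a>0$, and the stated limit fails) is a sound refinement of a point the paper glosses over.
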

For optimal performance, the energy harvesting node needs to use the \emph{right} amount of transmit power. 
On the one hand, reducing $P_{\rm{t}}$ helps improve the energy supply probability as a packet transmission is less likely to face an energy outage. On the other hand, it is detrimental for the communication link as it lowers the SNR. We now quantify the optimal transmit power that maximizes the asymptotic achievable rate for a given set of parameters. We note that many of the analytical insights obtained for the asymptotic regime are also useful for the non-asymptotic regime (see Remark 3).

\begin{corollary}\normalfont
For a given $P_{\rm{E}}$, there exists an optimal transmit power that maximizes the achievable rate. We let $P_{\rm{t},\infty}^*$ be the rate-maximizing transmit power in the asymptotic blocklength regime. It follows that 
\begin{align}\label{eq: opt P_t}
P_{\rm{t},\infty}^*(\epsilon,P_{\rm{E}},\sigma^2)&=
\sigma^2\left(\frac{\frac{P_{\rm{E}}}{\sigma^2}\log(1+0.5\epsilon)-1}
{
\text{W}\left[\left(\frac{P_{\rm{E}}}{\sigma^2}\log(1+0.5\epsilon)-1\right) e^{-1}\right]
}-1\right)
\end{align}where $\text{W}[\cdot]$ is the Lambert W-function. 
\end{corollary}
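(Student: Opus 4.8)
The plan is to regard the asymptotic rate of Proposition \ref{proposition:asym rate} as a function of the single free variable $P_{\rm{t}}$, exploiting the fact that both arguments $a=P_{\rm{t}}/P_{\rm{E}}$ and $\gamma=P_{\rm{t}}/\sigma^2$ are proportional to $P_{\rm{t}}$ once $P_{\rm{E}}$, $\sigma^2$ and $\epsilon$ are held fixed. Substituting these into $R^\infty_{\textrm{EH}}=L(a,\epsilon)\,C^\infty_{\textrm{AWGN}}(\gamma)$ and abbreviating $\kappa=\log(1+0.5\epsilon)$ gives
\begin{align}
R^\infty_{\textrm{EH}}(P_{\rm{t}})=\frac{\kappa P_{\rm{E}}}{\kappa P_{\rm{E}}+P_{\rm{t}}}\cdot\frac{1}{2}\log\!\left(1+\frac{P_{\rm{t}}}{\sigma^2}\right).
\end{align}
First I would argue that a maximizer exists in the interior of $(0,\infty)$: this function vanishes at $P_{\rm{t}}=0$, is strictly positive and smooth for $P_{\rm{t}}>0$, and tends to $0$ as $P_{\rm{t}}\to\infty$, since the prelog factor $L$ decays like $1/P_{\rm{t}}$ while the capacity term grows only logarithmically. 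By continuity the maximum is therefore attained at an interior stationary point.

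Next I would differentiate via the product rule and set the result to zero. The derivative of the first factor is $-\kappa P_{\rm{E}}/(\kappa P_{\rm{E}}+P_{\rm{t}})^2$, and that of the logarithmic term is proportional to $1/(\sigma^2+P_{\rm{t}})$; after cancelling the common positive factor $\kappa P_{\rm{E}}/(\kappa P_{\rm{E}}+P_{\rm{t}})$, the first-order condition collapses to the transcendental equation
\begin{align}
\kappa P_{\rm{E}}+P_{\rm{t}}=(\sigma^2+P_{\rm{t}})\,\log\!\left(1+\frac{P_{\rm{t}}}{\sigma^2}\right).
\end{align}

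The crux is then to expose the Lambert-$\text{W}$ structure of this equation. Introducing the normalized variable $u=1+P_{\rm{t}}/\sigma^2=(\sigma^2+P_{\rm{t}})/\sigma^2$ and writing $A=\frac{P_{\rm{E}}}{\sigma^2}\log(1+0.5\epsilon)-1$, the condition rearranges to $u\bigl(\log u-1\bigr)=A$. The substitution $t=\log u-1$, i.e.\ $u=e^{t+1}$, turns the left-hand side into $e^{t+1}t$, yielding $t\,e^{t}=A e^{-1}$; by definition of the Lambert-$\text{W}$ function this gives $t=\text{W}\!\left[A e^{-1}\right]$, hence $u=A/\text{W}\!\left[A e^{-1}\right]$. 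Undoing the substitution $u=1+P_{\rm{t}}/\sigma^2$ produces $P_{\rm{t}}^*=\sigma^2\bigl(A/\text{W}[Ae^{-1}]-1\bigr)$, which is exactly (\ref{eq: opt P_t}). I would close by verifying that this stationary point is the global maximizer — e.g.\ by confirming $R^\infty_{\textrm{EH}}$ changes the sign of its derivative from positive to negative there (unimodality on $(0,\infty)$) — and that the principal branch of $\text{W}$ is the relevant one over the parameter range of interest (essentially $A>0$, i.e.\ a sufficiently large $P_{\rm{E}}/\sigma^2$), which guarantees $P_{\rm{t}}^*>0$.

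I expect the main obstacle to be precisely the algebra that reveals the Lambert-$\text{W}$ form: spotting the two-stage change of variables $u=1+\gamma$ followed by $t=\log u-1$ that reduces $u(\log u-1)=A$ to the canonical $t\,e^{t}=Ae^{-1}$. Everything before that (differentiation, factoring the first-order condition) and after it (back-substitution, verifying it is a maximum and identifying the correct branch) is routine; the only additional care point is bookkeeping of the logarithm base, which enters $L$ and $C$ symmetrically and therefore does not move the location of the optimum.
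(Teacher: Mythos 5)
Your derivation is correct, and it is worth noting that the paper states Corollary 3 \emph{without any proof}, so your argument supplies the missing one rather than paralleling an existing one; the route you take (stationary point of $R^{\infty}_{\mathrm{EH}}(P_{\mathrm{t}})=\frac{\kappa P_{\mathrm{E}}}{\kappa P_{\mathrm{E}}+P_{\mathrm{t}}}\cdot\frac{1}{2}\log(1+P_{\mathrm{t}}/\sigma^2)$, reduction of the first-order condition to $u(\log u-1)=A$ with $u=1+\gamma$ and $A=\frac{P_{\mathrm{E}}}{\sigma^2}\log(1+0.5\epsilon)-1$, then $t=\log u-1$ giving $t e^{t}=Ae^{-1}$ and hence the Lambert-$\text{W}$ form of the stated formula) is clearly the intended derivation, and your existence and unimodality checks are sound since the sign of $R'$ equals the sign of $-(u\log u-u-A)$, which is strictly decreasing in $P_{\mathrm{t}}$ past $u=1$. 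Two minor corrections: first, your closing claim that the logarithm base ``enters $L$ and $C$ symmetrically and therefore does not move the optimum'' is not quite the right justification --- if you changed the base inside $L$ the stationary point \emph{would} move (the prelog factor changes shape, not just scale); the optimum is well defined only because the proof of Proposition 2 pins the $\log$ in $L(a,\epsilon)$ to the natural logarithm via $\lim_{n\to\infty} n\left(\left(1+x\right)^{2/n}-1\right)=2\ln(1+x)$, while the base of $C^{\infty}_{\mathrm{AWGN}}$ enters only as a constant multiplicative factor that cannot shift the argmax. Second, your restriction to $A>0$ is stronger than necessary: since $t=\log u-1>-1$ for every $u>1$ and the principal branch satisfies $\text{W}_0\geq -1$, the formula is valid on the principal branch for all $A>-1$, i.e., for every $\epsilon\in(0,1)$ and $P_{\mathrm{E}}>0$ (with the removable singularity $A=0$ giving $P_{\mathrm{t},\infty}^{*}=\sigma^2(e-1)$), so no largeness assumption on $P_{\mathrm{E}}/\sigma^2$ is needed.
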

Plugging $P_{\rm{t}}=P_{\rm{t},\infty}^*$ in Proposition \ref{proposition:asym rate} gives the optimal achievable rate in the asymptotic blocklength regime.
Furthermore, when $P_{\rm{t}}$ is fixed, the achievable rate improves monotonically with $P_{\rm{E}}$ due to an increase in the energy supply probability. 

\begin{remark}\normalfont
The optimal transmit power for the asymptotic case serves as a conservative estimate for the optimal transmit power for the non-asymptotic case (Fig. \ref{fig:P_opt vs Pe}). Moreover, the achievable rate in the non-asymptotic regime obtained using the asymptotically optimal transmit power, gives a tight lower bound for the optimal achievable rate in the non-asymptotic regime (Fig. \ref{fig:rate vs. error}). This suggests that Corollary 3 provides a useful analytical handle for transmit power selection even for the finite blocklength regime (despite the fact that the resulting rate for the non-asymptotic case could be much smaller than that for the asymptotic case).  
\end{remark}

\section{Numerical Results}\label{secSim}
We now present the simulation results based on the analysis in Section \ref{secAnl}.
The following plots have been obtained using the minimum latency approach where the minimum possible blocklength is selected for the given set of parameters, based on the constraints in (\ref{eq:mainc1}) and (\ref{eq:mainc2}). We assume the noise power $\sigma^2=1$, and do not specify the units of $P_{\rm{t}}$ and $P_{\rm{E}}$ since the results are valid for any choice of the units (say Joules/symbol).
In Fig. \ref{fig:rate vs ratio}, we use Theorem 1 and Proposition 2 to plot the achievable rate versus the power ratio $a$ for a given $\epsilon$ and $P_{\rm{E}}$. The plots reflect the underlying tension between the energy supply probability and the channel SNR, resulting in an optimal transmit power (or power ratio) that maximizes the achievable rate. 
In Fig. \ref{fig:rate vs. error}, we plot the achievable rate versus the target error probability $\epsilon$ for a  given power ratio $a$. 
 We first consider the (fixed power) case where we fix the transmit power $P_{\rm{t}}=1.1554$ and the power ratio $a=0.0012$ (these values are asymptotically optimal for $P_{\rm{E}}=10^3$ and $\epsilon=10^{-3}$).
 As $\epsilon$ increases, the achievable rate tends to increase until a limit, beyond which the rate tends to decrease. This is because as we allow for more error ($\epsilon \uparrow$), the required total blocklength decreases. This means a possible increase in the energy supply probability (as the power ratio is fixed), and a larger backoff from capacity due to a shorter transmit blocklength. Beyond a certain $\epsilon$, further reduction in blocklength pronounces the higher order backoff terms, eventually reducing the rate. For a \emph{fixed} total blocklength, however, the achievable rate indeed increases with $\epsilon$. Note that these trends differ from the asymptotic case where the rate monotonically increases with $\epsilon$. 
 
 We then consider the case where we adapt the transmit power using Corollary 3. In Fig. \ref{fig:rate vs. error}, we observe a substantial increase in the rate by optimally adjusting the transmit power in terms of the system parameters. Moreover, the \emph{asymptotically} optimal transmit power $P^*_{\rm{t,\infty}}$ (from Corollary 3) in the finite blocklength regime results in only a minor loss in performance. As evident from Fig. \ref{fig:rate vs. error}, the optimal rate in the finite blocklength regime (obtained by numerically optimizing over $P_{\rm{t}}$) is almost indistinguishable from the lower bound obtained using the asymptotically optimal power $P^*_{\rm{t,\infty}}$.

\begin{figure} [t]
	\centerline{
		\includegraphics[width=.9\columnwidth]{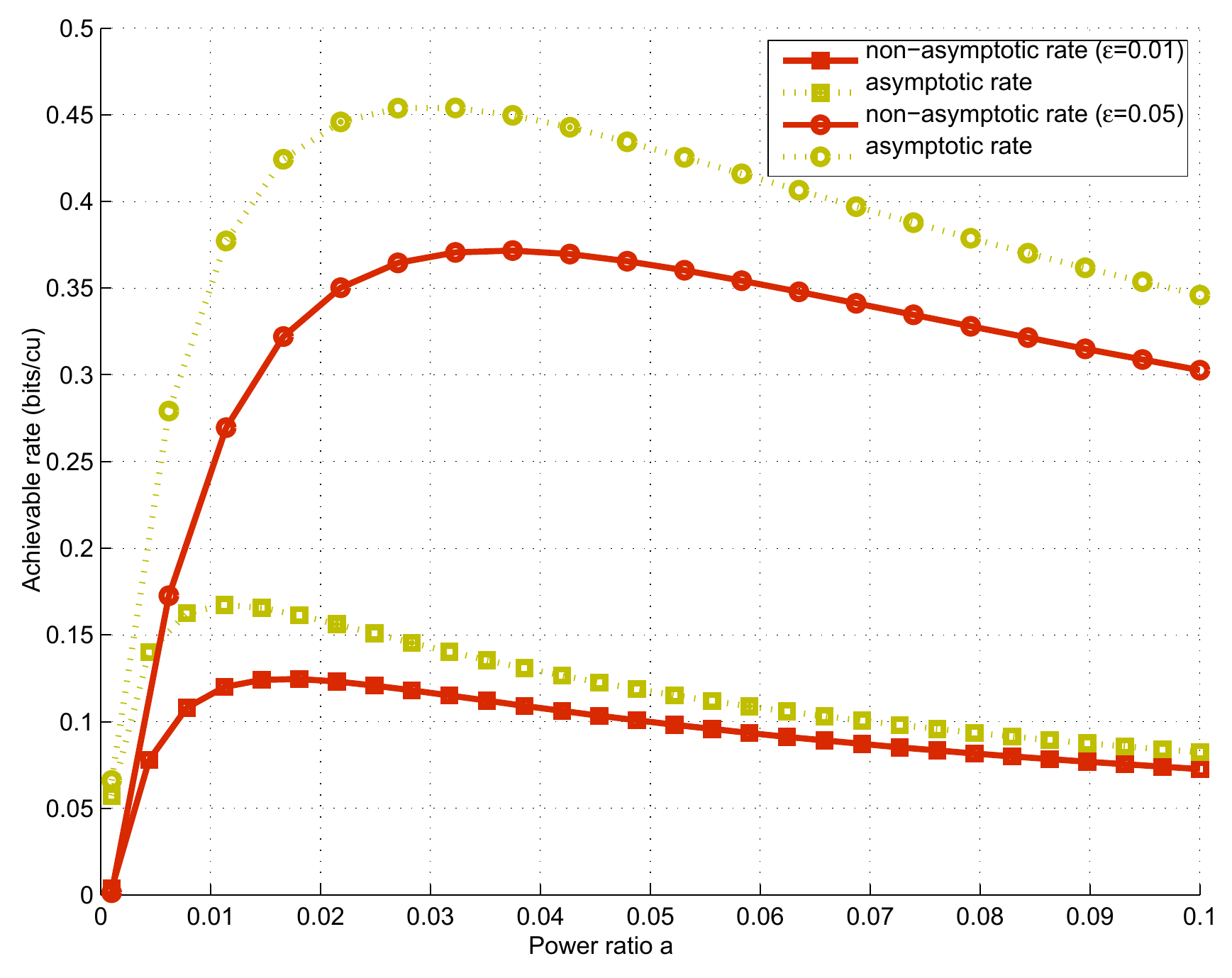}}
	\caption{The achievable rate (bits/channel use) vs. the power ratio $a=\frac{P_{\rm{t}}}{P_{\rm{E}}}$ for $P_{\rm{E}}={10}^2$. There is an optimal transmit power that maximizes the rate.}
	\label{fig:rate vs ratio}
\end{figure}

\begin{figure} [t]
	\centerline{
		\includegraphics[width=0.9\columnwidth]{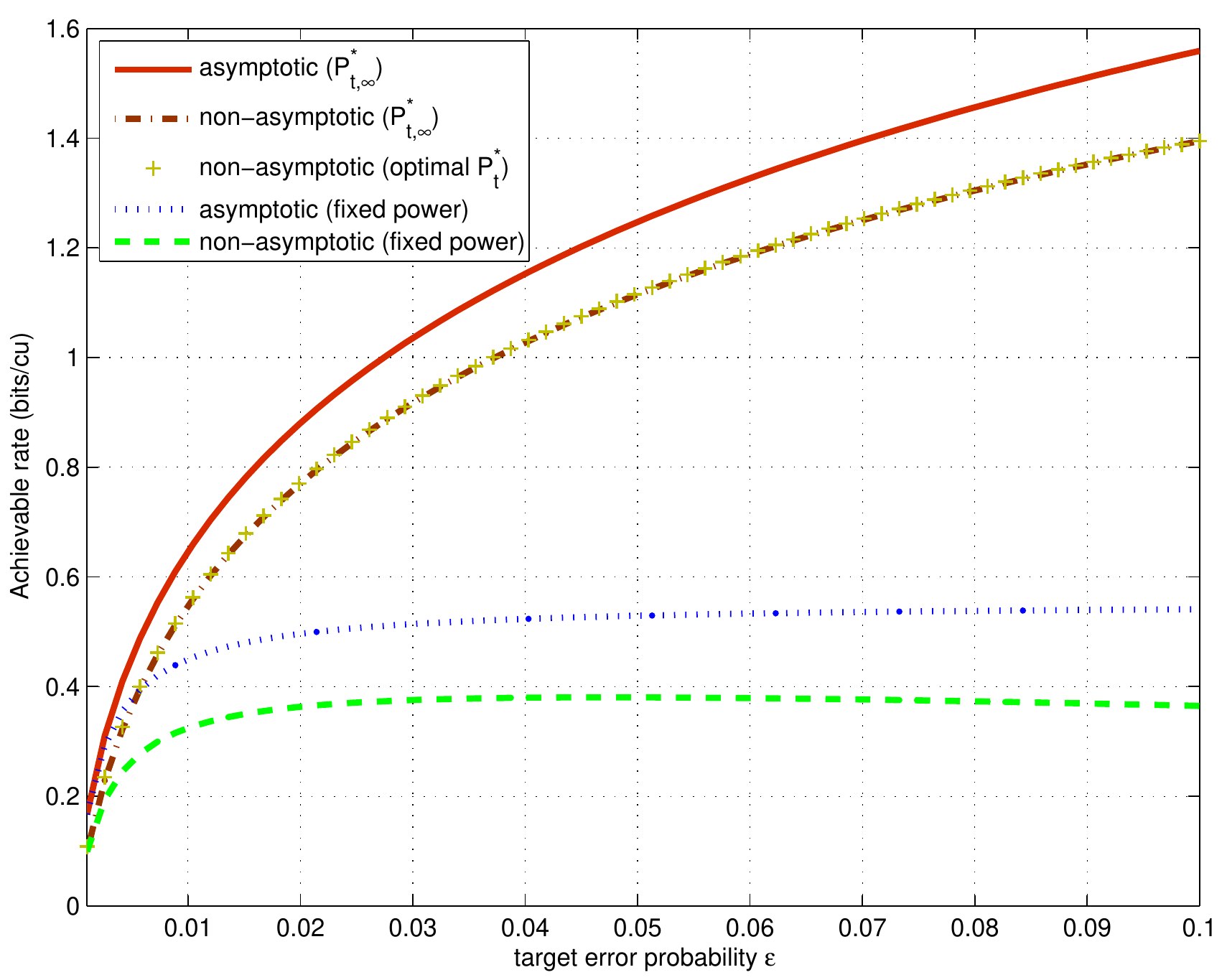}}
	\caption{The achievable rate (bits/channel use) vs. the target error probability $\epsilon$ for a given power ratio $a=0.0012$. While the asymptotic rate increases as we allow for more error, the non-asymptotic rate behaves differently.}
	\label{fig:rate vs. error}
\end{figure}

\begin{figure} [t]
	\centerline{	\includegraphics[width=.9\columnwidth]{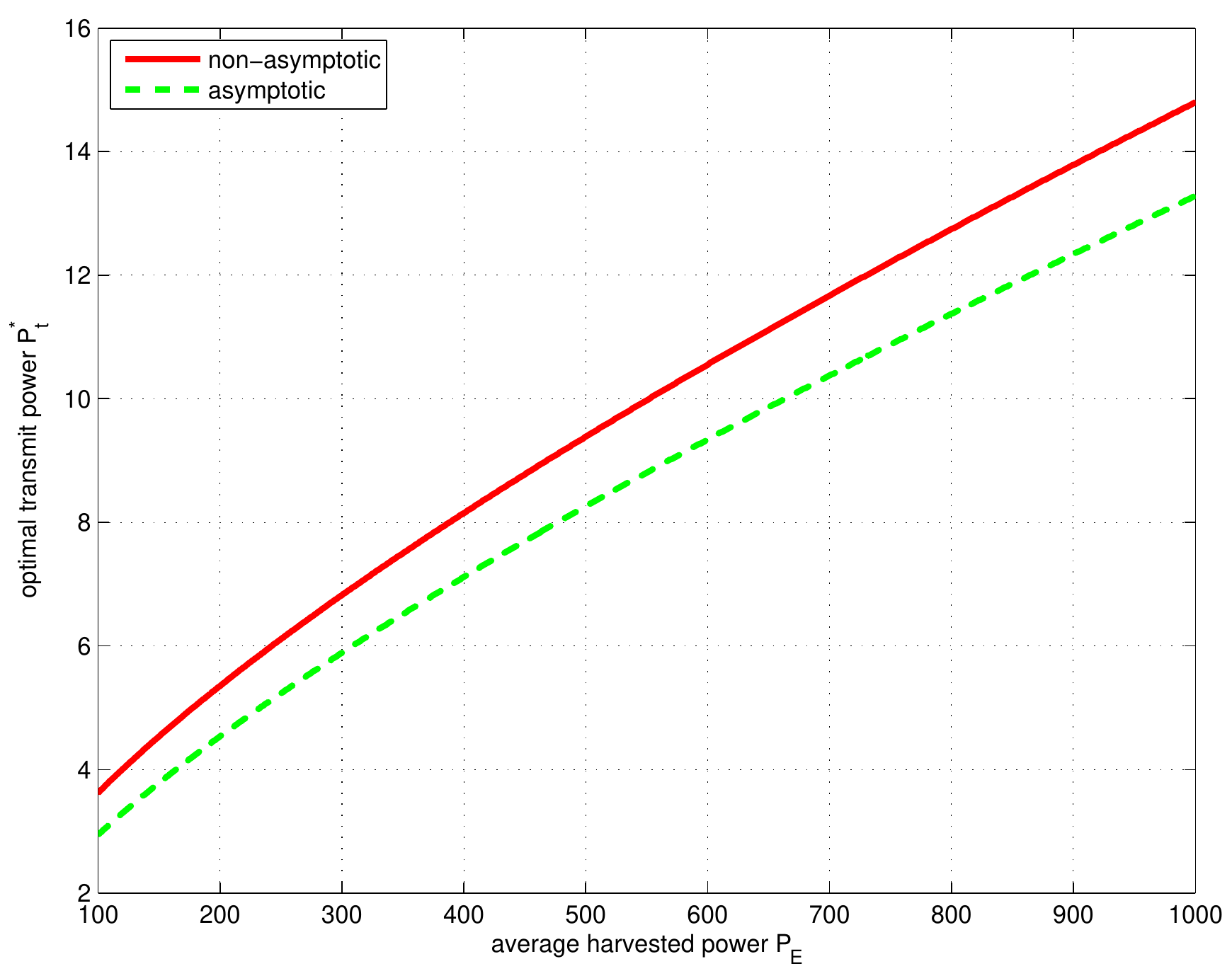}}
	\caption{Optimal transmit power $P_{\rm{t}}$ vs. average harvested power $P_{\rm{E}}$.}
	\label{fig:P_opt vs Pe}
\end{figure}

In Fig. \ref{fig:P_opt vs Pe}, we plot the optimal transmit power versus the average harvested power for $\epsilon=0.05$ and transmit blocklength $n=\lfloor \log\left(\frac{2+\epsilon}{\epsilon^2}\right)^4\rfloor=2026$. For each $P_{\rm{E}}$, the harvest blocklength is selected to satisfy the constraints in (\ref{eq:mainc1}) and (\ref{eq:mainc2}). We observe that the asymptotically optimal transmit power is a conservative estimate of the optimal transmit power for the finite case (Remark 3). Even though the optimal transmit power increases with $P_{\rm{E}}$, 
we note that the optimal power ratio still decreases as $P_{\rm{E}}$ is increased. In other words, while it is optimal to increase $P_{\rm{t}}$ with $P_{\rm{E}}$, the scaling is sublinear in $P_{\rm{E}}$. 
 
\section{Conclusions}\label{secConc}
We analytically characterized the energy supply probability and the achievable rate of a wireless-powered communication system in the finite blocklength regime. Using analytical expressions as well as numerical simulations, we investigated the interplay between key system parameters such as the harvest blocklength, the transmit blocklength, the error probability, and the power ratio. Moreover, we derived closed-form expression for the optimal transmit power in the asymptotic blocklength regime. 
Numerical results show that using the asymptotically optimal transmit power can substantially improve the achievable rate even in the finite blocklength regime.
\appendices
\allowdisplaybreaks
\section*{Appendix}

Let us bound the energy outage probability as
\begin{align}\label{eq:eneergyBound1}
\Pr\left[{\bigcup_{k=1}^{n}}\left\{\sum_{\ell=1}^{k}X_\ell^2\geq\sum_{i=1}^{m}Z_i\right\} \right]&\leq 1-\frac{2}{2+\epsilon}
\end{align}
for $\epsilon\in[0,1)$. The constraint in (\ref{eq:eneergyBound1}) can be equivalently expressed in terms of the energy supply probability as
$\Pr\left[\sum_{\ell=1}^{n}X_\ell^2\leq\sum_{i=1}^{m}Z_i \right]
\geq \frac{2}{2+\epsilon}$.
We let $X^n(W)$ and $Y^n$ denote the intended codeword sequence for a message $W\in\mathcal{W}$, and the received sequence. The decoder $\mathcal{G}({Y}^n)$ employs the following threshold decoding rule \cite{fong2015non} to decode the received signal: $\mathcal{G}({Y}^n)=i$ if there exists a unique integer $i\in\mathcal{W}$ that satisfies
\begin{align}\label{eq:rule}
\log\left(\frac{p_{Y^n|X^n}\left(Y^n|X^n(i)\right)}{p_{Y^n}\left(Y^n\right)}\right)>\log(M)+n^{\frac{1}{4}},
\end{align}
otherwise $\mathcal{G}({Y}^n)=w$, where $w$ is drawn uniformly at random from $\mathcal{W}$. 
Here, the notation $p_{Y^n|X^n}(\cdot)$ denotes the joint conditional distribution of random sequence $Y^n$ given $X^n$.
We express the probability of decoding error
$\Pr\left[\mathcal{G}({Y}^n)\neq W\right]$ in (\ref{eq:decodingerror}).
\begin{align}\label{eq:decodingerror}
\Pr&\left[\mathcal{G}({Y}^n)\neq W\right]=\nonumber\\
&\Pr\left[\mathcal{G}({Y}^n)\neq W, {Y}^n=X^n(W)+V^n\right]
+\Pr\left[\mathcal{G}({Y}^n)\neq W, {Y}^n\neq X^n(W)+V^n\right]\nonumber\\
&\leq\Pr\left[\mathcal{G}(X^n(W)+V^n)\neq W\right]+
\frac{\epsilon}{2+\epsilon},
\end{align}
where the inequality results from (\ref{eq:eneergyBound1}).
To calculate $\Pr\left[\mathcal{G}(X^n(W)+V^n)\neq W\right]$, we define ${\mathcal{A}}_{i|j}$ as the event that $i\in\mathcal{W}$ satisfies the threshold decoding rule of (\ref{eq:rule}) when $j\in\mathcal{W}$ is transmitted, i.e., 
\begin{align}\label{eq:Aij}
{\mathcal{A}}_{i|j}=\left\{\log\left(\frac{p_{Y^n|X^n}\left(X^n(j)+V^n|X^n(i)\right)}{p_{Y^n}\left(X^n(j)+V^n\right)}\right)>\log(M)+n^{\frac{1}{4}}\right\},
\end{align}
and ${\mathcal{A}}_{i|j}^c$ denotes its complement. As the message $W$ is uniform on $\mathcal{W}$, it follows that the decoding error probability 

\allowdisplaybreaks\begin{align}
\Pr\left[\mathcal{G}(X^n(W)+V^n)\neq W\right]
&\overset{(a)}{=}\frac{1}{M}\sum_{w=1}^{M}\Pr\left[{\mathcal{A}}_{w|w}^c\bigcup\bigcup_{i\neq w, i\in\mathcal{W}}{\mathcal{A}}_{i|w}\Big|W=w\right]\nonumber\\
&\overset{(b)}{=}\Pr\left[{\mathcal{A}}_{1|1}^c\bigcup\bigcup_{i=2}^M{\mathcal{A}}_{i|1}\right]
\overset{(c)}{\leq}\Pr\left[{\mathcal{A}}_{1|1}^c\right]+\Pr\left[\bigcup_{i=2}^M{\mathcal{A}}_{i|1}\right]\nonumber\\
&\overset{(d)}{\leq}\Pr\left[{\mathcal{A}}_{1|1}^c\right]+e^{-n\delta}
\overset{(e)}{\leq}\Pr\left[{\mathcal{A}}_{1|1}^c\right]+\frac{\epsilon^2}{2+\epsilon}
\end{align}
where (b) follows from the symmetry in random codebook construction, (c) results from applying the Union bound, and (d) is obtained by invoking Lemma 3 from \cite{fong2015non}. Finally, (e) follows by setting $n\delta=n^{\frac{1}{4}}$, and by further noting that 
$n\geq \left(\log\left(\frac{2+\epsilon}{\epsilon^2}\right)\right)^4$,
which follows from the constraints in (\ref{eq:mainc1}), (\ref{eq:mainc2}).
Before proceeding further, 
let us assume that $M$ is a unique integer that satisfies (\ref{eq:assumeM}).
To find a bound for $\Pr\left[{\mathcal{A}}_{1|1}^c\right]$, consider the set of inequalities in (\ref{eq:q1}), 
where ($a$) follows from the definition of ${\mathcal{A}}_{i|j}$ in (\ref{eq:Aij}), while the bound in ($b$) results from (\ref{eq:assumeM}). Finally, ($c$) is obtained by applying Chebychev's inequality. From (\ref{eq:Aij}) and (\ref{eq:q1}), it follows that 
$\Pr\left[\mathcal{G}(X^n(W)+V^n)\neq W\right]
=\frac{\epsilon+\epsilon^2}{2+\epsilon}$;
and further using (\ref{eq:decodingerror}), we conclude that 
$\Pr\left[\mathcal{G}({Y}^n)\neq W\right]\leq\epsilon$,
where $W$ is the transmitted message. Therefore, we conclude that the constructed code is an $(n+m,M,\epsilon)$-code that satisfies the following equations (\ref{eq:c1})-(\ref{eq:c3}).
\newcounter{MYtempeqncnt}
\begin{figure*}[!t]
\normalsize
\setcounter{MYtempeqncnt}{\value{equation}}
\label{eqn_dbl_x}
\setcounter{equation}{20}

\begin{align}\label{eq:assumeM}
\log(M+1)\geq n\mathbb{E}\left[\log\left(\frac{p_{Y|X}(Y|X))}{p_{Y}(Y)}\right)\right]
-\left({\frac{2+\epsilon}{\epsilon}n\mathtt{{Var}}\left[\log\left(\frac{p_{Y|X}(Y|X))}{p_{Y}(Y)}\right)\right]}\right)^{\frac{1}{2}}>\log(M)
\end{align}

\begin{align}
\label{eq:q1}
\Pr\left[{\mathcal{A}}_{1|1}^c\right]&\overset{(a)}{=}\Pr\left[\log\left(\frac{p_{Y^n|X^n}(X^n(1)+V^n|X^n(1))}{p_{Y^n}(X^n(1)+V^n)}\right)\leq\log(M)+n^{\frac{1}{4}}\right]\nonumber\\
&=\Pr\left[\sum_{k=1}^{n}\log\left(\frac{p_{Y|X}(X_k(1)+V_k|X_k(1))}{p_{Y}(X_k(1)+V_k)}\right)\leq\log(M)+n^{\frac{1}{4}}\right]\nonumber\\
&\overset{(b)}{\leq}\Pr\Bigg[\sum_{k=1}^{n}\log\left(\frac{p_{Y|X}(X_k(1)+V_k|X_k(1))}{p_{Y}(X_k(1)+V_k)}\right)\leq\nonumber\\
&\qquad\qquad\qquad n\mathbb{E}\left[\log\left(\frac{p_{Y|X}(Y|X))}{p_{Y}(Y)}\right)\right]-\left(\frac{2+\epsilon}{\epsilon}n\textrm{Var}\left[{\log\left(\frac{p_{Y|X}(Y|X))}{p_{Y}(Y)}\right)}\right]\right)^{\frac{1}{2}}\Bigg]\nonumber\\
&\leq\Pr\Bigg[\left|\sum_{k=1}^{n}\log\left(\frac{p_{Y|X}(X_k(1)+V_k|X_k(1))}{p_{Y}(X_k(1)+V_k)}\right)-
n\mathbb{E}\left[\log\left(\frac{p_{Y|X}(Y|X))}{p_{Y}(Y)}\right)\right]
\right|\geq\nonumber\\
&\qquad\qquad\qquad\qquad\qquad\qquad\left(\frac{2+\epsilon}{\epsilon}n\textrm{Var}\left[{\log\left(\frac{p_{Y|X}(Y|X))}{p_{Y}(Y)}\right)}\right]\right)^{\frac{1}{2}}\Bigg]\nonumber\\
&\overset{(c)}{\leq}\frac{\epsilon}{2+\epsilon}
\end{align}
\setcounter{equation}{\value{MYtempeqncnt}}
\hrulefill
\vspace*{4pt}
\end{figure*}

\addtocounter{equation}{2}

\begin{align}\label{eq:c1}
\log(M+1)\geq n\mathbb{E}\left[\log\left(\frac{p_{Y|X}(Y|X))}{p_{Y}(Y)}\right)\right]-\left({\frac{2+\epsilon}{\epsilon}n\mathtt{{Var}}\left[\log\left(\frac{p_{Y|X}(Y|X))}{p_{Y}(Y)}\right)\right]}\right)^{\frac{1}{2}}
\end{align}

\begin{align}\label{eq:c2}
\log(M+1)&\geq \frac{n}{2}\log(1+\gamma)
-\sqrt{\frac{2+\epsilon}{\epsilon}\frac{\gamma}{1+\gamma}n}-n^{\frac{1}{4}}\\ 
\label{eq:c3}\log(M)&\geq \frac{n}{2}\log(1+\gamma)
-\sqrt{\frac{2+\epsilon}{\epsilon}\frac{\gamma}{1+\gamma}n}-n^{\frac{1}{4}} -1 
\end{align}
Here, (\ref{eq:c2}) is obtained by noting that the mutual information $\mathbb{E}\left[\log\left(\frac{p_{Y|X}(Y|X))}{p_{Y}(Y)}\right)\right]=\frac{1}{2}\log\left(1+\gamma\right)$, while the variance $\mathtt{{Var}}\left[\log\left(\frac{p_{Y|X}(Y|X))}{p_{Y}(Y)}\right)\right]=\frac{\gamma}{1+\gamma}$. The last equation follows by noting that $\log\left(M+1\right)-\log\left(M\right)<1$. Using (\ref{eq:c3}) with the constraints in (\ref{eq:mainc1}) and (\ref{eq:mainc2}) completes the proof.

\bibliographystyle{ieeetr}

\begin{thebibliography}{1}

\bibitem{EnergyHarvestWirelessCommSurvey2015}
S.~Ulukus, A.~Yener, E.~Erkip, O.~Simeone, M.~Zorzi, P.~Grover, and K.~Huang,
  ``Energy harvesting wireless communications: A review of recent advances,''
  {\em IEEE J. Sel. Areas Commun.}, vol.~33, pp.~360--381, Mar. 2015.

\bibitem{IoT2014}
A.~Zanella {\em et~al.}, ``Internet of {T}hings for smart cities,'' {\em IEEE
  Internet Things J.}, vol.~1, pp.~22--32, Feb. 2014.

\bibitem{durisi2015towards}
G.~Durisi, T.~Koch, and P.~Popovski, ``Towards massive, ultra-reliable, and
  low-latency wireless: The art of sending short packets,'' {\em arXiv preprint
  arXiv:1504.06526}, 2015.

\bibitem{yang2014finite}
J.~Yang, ``Achievable rate for energy harvesting channel with finite
  blocklength,'' in {\em 2014 IEEE Int. Symp. Inf. Theory}, pp.~811--815, Jun.
  2014.

\bibitem{fong2015non}
S.~L. Fong, V.~Y.~F. Tan, and J.~Yang, ``Non-asymptotic achievable rates for
  energy-harvesting channels using save-and-transmit,'' {\em IEEE J. Sel. Areas
  Commun.}, vol.~PP, pp.~1--1, 2016.

\bibitem{guo2016finite}
A.~Guo, H.~Yin, and W.~Wang, ``Performance analysis of energy harvesting
  wireless communication system with finite blocklength,'' {\em IEEE Commun.
  Lett.}, vol.~20, pp.~324--327, Feb. 2016.

\bibitem{ebrahim2015lowlatency}
E.~MolavianJazi and A.~Yener, ``Low-latency communications over zero-battery
  energy harvesting channels,'' in {\em 2015 IEEE Global Commun. Conf.
  (GLOBECOM)}, pp.~1--6, Dec. 2015.

\bibitem{polyanskiy2010finite}
Y.~Polyanskiy, H.~V. Poor, and S.~Verdu, ``Channel coding rate in the finite
  blocklength regime,'' {\em IEEE Trans. Inf. Theory}, vol.~56, pp.~2307--2359,
  May 2010.

\bibitem{ozel2012awgn}
O.~Ozel and S.~Ulukus, ``Achieving {AWGN} capacity under stochastic energy
  harvesting,'' {\em IEEE Trans. Inf. Theory}, vol.~58, pp.~6471--6483, Oct.
  2012.
\end{thebibliography}

\end{document}